\author{ \IEEEauthorblockN{Rania Morsi\IEEEauthorrefmark{1}, Elena Boshkovska\IEEEauthorrefmark{1}, Esmat Ramadan\IEEEauthorrefmark{1}, Derrick Wing Kwan Ng\IEEEauthorrefmark{2}, and Robert Schober\IEEEauthorrefmark{1} 
%\thanks{This work was supported in part by the AvH Professorship Program of the Alexander von Humboldt Foundation.}
}
\IEEEauthorblockA{\IEEEauthorrefmark{1}Friedrich-Alexander-University Erlangen-N\"urnberg (FAU), Germany\\} 
 \IEEEauthorblockA{\IEEEauthorrefmark{2}The University of New South Wales, Australia
    \\}
%Email: elena.boshkovska@studium.fau.de, rania.morsi@fau.de, kwan@lnt.de, schober@lnt.de\\
 \\[-4ex]
}
\title{On the Performance of Wireless Powered Communication With Non-linear Energy Harvesting}
\date{\thistime,\,\today}
\theoremstyle{plain}
\newtheorem{corollary}{Corollary}
\newtheorem{proposition}{Proposition}
\theoremstyle{definition}
\theoremstyle{remark}
\newtheorem{remark}{Remark}
\newcommand{\norm}[1]{\lVert#1\rVert}
\newcommand{\V}[1]{\boldsymbol{#1}} % Vector or matrices 
\newcommand{\e}{{\rm e}}
\newcommand{\dd}{{\rm d}}
\newcommand{\PR}{P_{\rm R}}
\newcommand{\PEH}{P_{\rm{EH}}}
\newcommand{\EEH}{E_{\rm{EH}}}
\newcommand{\PWD}{P_{\rm{WD}}}
\newcommand{\pout}{P_{\rm{out}}}
\newcommand{\Pt}{P_{\rm{t}}}
\newcommand{\pr}{\mathbb{P}}
\newcommand{\gammathr}{\gamma_{\rm thr}}
\newcommand{\quotes}[1]{``#1''}
\begin{document}

\maketitle
\begin{abstract}
In this paper, we analyze the performance of a time-slotted multi-antenna wireless powered communication (WPC) system, where a wireless device first harvests radio frequency (RF) energy from a power station (PS) in the downlink to facilitate information transfer to an information receiving station (IRS) in the uplink.  The main goal of this paper is to provide insights and guidelines for the design of practical WPC systems. To this end, we adopt a recently proposed parametric non-linear RF energy harvesting (EH) model, which has been shown to accurately model the end-to-end non-linearity of practical RF EH circuits. In order to enhance the RF power transfer efficiency,  maximum ratio transmission is adopted at the PS to focus the energy signals on the wireless device. Furthermore, at the IRS, maximum ratio combining is used. We analyze the outage probability and the average throughput of information transfer, assuming Nakagami-$m$ fading uplink and downlink channels. Moreover, we study the system performance as a function of the number of PS transmit antennas, the number of IRS receive antennas, the transmit power of the PS, the fading severity, the transmission rate of the wireless device, and the EH time duration. In addition, we obtain a fixed point equation for the optimal transmission rate and the optimal EH time duration that maximize the asymptotic throughput for high PS transmit powers. All analytical results are corroborated by simulations. 
%\begin{IEEEkeywords} ... \end{IEEEkeywords} Do we really have to add keywords.... I want to add keywords in the pdf meta data to let my paper show easily in the google search I assume
 % Our results reveal that, unlike the misleadingly optimistic linear RF EH model, the non-linear EH model results in an outage floor at high transmit power of the PS (asymptotic regime). Moreover, increasing the number of beamforming antennas reduces the required transmit needed for the outage probability to saturate. On the other hand, increasing the number of MRC antennas result in significant asymptotic throughput gain.
 %operating in a time-division based half-duplex mode,
\end{abstract}
%\vspace*{1mm}

\renewcommand{\baselinestretch}{1}
\large\normalsize
%%%%%%%%%%%%%%%%%%%%%%%%%%%%%%%%%%%%%%%%%%%%%%%%%%%%%%%%%%%%%%%%%%%%%%%%%%%%%%%%%%%%%%%%%%%%%%%%
\vspace{-0.2cm}
\section{Introduction}
\label{s:introduction}
The limited lifetime of wireless communication devices has motivated the scavenging of energy from renewable energy sources to ensure a perpetual energy supply and sustainable network operation. However, opportunistic energy harvesting (EH) from conventional renewable energy sources such as solar and wind energy  is  uncontrollable, weather dependent, and not available indoors. On the other hand, radio frequency (RF)-based wireless power transfer (WPT) exploits the far-field properties of electromagnetic waves and facilitates stable wireless charging that can be provided on-demand \cite{RF_EH_networks_survey_2015,JR:Energy_harvesting_circuit,self_calibrating_EH_circuit2013,Krikidis2014,WPC_survey_Zhang_2016,
WPC_beamforming_Zhang,EE_WPC_Schober_2016,WIET_beamforming_feedback_2014,CR:Perf_Analys_WPT_Nakagami_linear,
perfom_analy_WPC_beamforming_linear_Vucetic2016,Morsi_storage_arxiv,
Letter_non_linear,Journal_non_linear,L:Perf_Analys_Nakagami_nonlin,J:Perf_Analys_WPR_MIMO_non_lin,waveform_nonlinear_WPT_Clerckx_Zhang2017}.

% However, due to the high propagation path loss, RF energy is suitable for charging low-power consumption devices, such as consumer electronics and wireless sensors.
However, due to the high propagation loss, WPT is only applicable for charging low-power devices over a short distance.
One way to improve the energy transfer efficiency, and therefore to extend the WPT range, is by exploiting multiple transmit antennas at the RF power source to focus the energy signals at the EH receivers via beamforming
\cite{WPC_beamforming_Zhang,WIET_beamforming_feedback_2014,perfom_analy_WPC_beamforming_linear_Vucetic2016, Letter_non_linear,Journal_non_linear,L:Perf_Analys_Nakagami_nonlin,waveform_nonlinear_WPT_Clerckx_Zhang2017}.
Moreover, a considerable amount of work has been devoted to the optimization of the efficiency of the RF EH circuits, which convert the collected RF energy to electrical direct current (DC) energy at the EH receivers, see e.g. \cite{RF_EH_networks_survey_2015,JR:Energy_harvesting_circuit,self_calibrating_EH_circuit2013}.

In this paper, we consider wireless powered communication (WPC), where wireless devices harvest RF energy and exploit the harvested energy to transfer information to their designated receivers \cite{WPC_survey_Zhang_2016}.
The design, analysis, and optimization of different WPC systems have recently received considerable attention % in both industry and academia 
\cite{WPC_beamforming_Zhang,EE_WPC_Schober_2016,WIET_beamforming_feedback_2014,CR:Perf_Analys_WPT_Nakagami_linear,
perfom_analy_WPC_beamforming_linear_Vucetic2016,Morsi_storage_arxiv,
Letter_non_linear,Journal_non_linear,L:Perf_Analys_Nakagami_nonlin,J:Perf_Analys_WPR_MIMO_non_lin,waveform_nonlinear_WPT_Clerckx_Zhang2017}. 
For example, \cite{WPC_beamforming_Zhang} and \cite{EE_WPC_Schober_2016} considered a multi-user WPC system that employed a time-division based harvest-then-transmit protocol and jointly optimized the users' time and power allocation to maximize the minimum user throughput and the system energy efficiency, respectively. In addition, several works focused on the performance analysis of WPC systems \cite{CR:Perf_Analys_WPT_Nakagami_linear,perfom_analy_WPC_beamforming_linear_Vucetic2016,Morsi_storage_arxiv}. For example, \cite{CR:Perf_Analys_WPT_Nakagami_linear} analyzed the outage probability and the average error rate of a WPC system in Nakagami-$m$ fading. In \cite{perfom_analy_WPC_beamforming_linear_Vucetic2016}, the average throughput of a WPC system was analyzed for delay-limited and delay-tolerant transmission.  Moreover, \cite{Morsi_storage_arxiv} studied the impact of energy storage on the outage probability of WPC systems.
%In \cite{WPC_Analysis_optimization_2015}, the impact of cochannel interference is additionally considered.

Most of the literature on WPC systems, e.g. \cite{WPC_beamforming_Zhang,EE_WPC_Schober_2016,WIET_beamforming_feedback_2014,CR:Perf_Analys_WPT_Nakagami_linear
,perfom_analy_WPC_beamforming_linear_Vucetic2016,Morsi_storage_arxiv}, adopts a linear RF EH model. This model assumes that the harvested DC power increases linearly and without bound with the RF power arriving at the EH circuit. It also assumes zero sensitivity, which is the minimum amount of RF input power necessary for the EH circuit to operate. 
However, experiments with practical RF EH circuits have shown that their input-output characteristic is highly non-linear. This is because rectifying EH circuits employ non-linear circuit elements, such as diodes and transistors \cite{RF_EH_networks_survey_2015,JR:Energy_harvesting_circuit,self_calibrating_EH_circuit2013}.  In particular, a practical RF EH circuit is typically characterized by a non-zero sensitivity for low input powers and saturation of the harvested power for high input powers. Hence, in order to accurately design WPT systems, a non-linear EH model is necessary. To this end,  the authors of \cite{Letter_non_linear} proposed a parametric non-linear EH model that accurately matches  measurement data from several practical RF EH circuits  \cite{RF_EH_networks_survey_2015,JR:Energy_harvesting_circuit,self_calibrating_EH_circuit2013}. Later, the authors of \cite{L:Perf_Analys_Nakagami_nonlin} and \cite{J:Perf_Analys_WPR_MIMO_non_lin} adopted a simpler piece-wise linear transfer function which also accounts for the saturation effect of practical EH circuits, but it cannot fully model the joint effect of circuit sensitivity and the current leakage \cite{RF_EH_networks_survey_2015,JR:Energy_harvesting_circuit,self_calibrating_EH_circuit2013}. 
 Furthermore, another interesting line of research in \cite{waveform_nonlinear_WPT_Clerckx_Zhang2017} studies the design of transmit signal waveforms that maximize the overall power transfer efficiency of WPT systems, considering a non-linear EH model. It is shown in \cite{Journal_non_linear} and \cite{waveform_nonlinear_WPT_Clerckx_Zhang2017} that,  due to the mismatch between the linear EH model and the non-linear behaviour of practical EH circuits, optimizing WPC systems based on the linear EH model leads to a significant performance degradation compared to  designs based on the non-linear EH models from \cite{Letter_non_linear} and \cite{waveform_nonlinear_WPT_Clerckx_Zhang2017}.
However, the design in  \cite{waveform_nonlinear_WPT_Clerckx_Zhang2017} is based on an analytical model for a specific rectifier, which does not capture impedance and input power mismatches nor the use of a general rectifier (e.g. with multiple diodes or transistor-based rectifiers). Hence, in this paper, we adopt the non-linear EH model from \cite{Letter_non_linear}, since it is a parameter-based model whose parameters can be easily obtained by accurate curve-fitting to the input-output characteristic of practical EH circuits.%, such as those in  \cite{RF_EH_networks_survey_2015,JR:Energy_harvesting_circuit,self_calibrating_EH_circuit2013}.  

In this paper, we perform a comprehensive analysis of a multi-antenna WPC systems based on the non-linear EH model in \cite{Letter_non_linear}. We consider a time-division based harvest-then-transmit protocol, where in each time slot, a wireless device (WD) first harvests RF energy from a power station (PS) and then uses the harvested energy to transmit data to an information receiving station (IRS). We analyze the outage probability and the average throughput of the wireless information transfer (WIT) link in Nakagami-$m$ fading. Furthermore, we evaluate the system performance as a function of the number of antennas, the PS transmit power, the fading severity, the transmission rate of the WD, and the EH time duration.\\
%\vspace{-0.65cm}
\emph{Notations:}
$\mathbb{C}^{N\times 1}$ represents the set of all column vectors of size $N$ with complex-valued entries. $|x|^2$ denotes the magnitude squared of complex number $x$ and $(\cdot)^*$ denotes the complex conjugate operator. $\norm{\V{x}}$ denotes the Euclidean norm of vector $\V{x}$ and $x_n$ denotes the $n$th element of vector $\V{x}$.
$\frac{\partial^{n}f }{\partial s^n}$ denotes the $n{\rm th}$ order partial derivative of function $f$ with respect to variable $s$. $\sim$ stands for \quotes{is distributed as}. ${\rm Gamma}(m,\lambda)$ is the Gamma distribution with shape parameter $m$ and rate parameter $\lambda$. $\Gamma(\cdot)$ is the Gamma function defined as $\Gamma(x)\!=\!(x-1)!$ for a positive integer $x$ and as $\Gamma(x)\!=\!\int_{0}^{\infty}t^{x-1}\e^{-t} \dd t$ for a positive real number $x$. $\Gamma(m,x)\!=\!\int_x^\infty t^{m-1}\e^{-t}\dd t$ and $\gamma(m,x)\!=\!\int_0^x t^{m-1}\e^{-t}\dd t$  are the upper and the lower incomplete Gamma functions, respectively. $W_{\alpha,\beta}(\cdot)$ is the Whittaker W function defined in \cite[Eq. 9.222.1]{table_of_integrals_Ryzhik}. $U(\alpha,\gamma,z)=\frac{1}{\Gamma(\alpha)}\int_0^\infty \e^{-z t}t^{\alpha-1}(1+t)^{\gamma-\alpha-1}\dd t$ is the confluent hypergeometric function defined in \cite[Eq. 9.211.4]{table_of_integrals_Ryzhik}. Finally, $\pr$ denotes the probability of an event.\vspace{-0.1cm}
\section{System Model and Preliminaries}
\label{s:system_model}
\subsection{System Model}%\vspace{-0.1cm}
%%%%%%%%%%%%%%%%%%%%%%%%%%%%
\begin{figure}[!t] 
\centering
\includegraphics[width=0.4\textwidth]{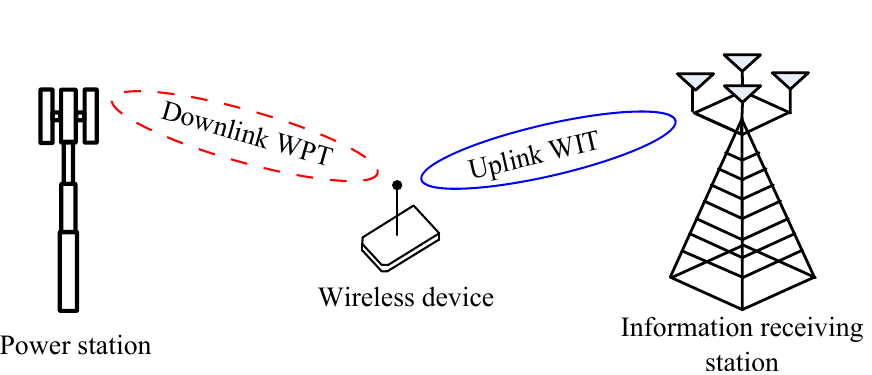}%lbrt , trim= {1cm 0 0cm 1cm},clip
\caption{A WPC system with DL WPT and UL WIT.\\[-2ex]}
\label{fig:system_model}
\end{figure}%height=0.17\textheight
We consider the time-slotted WPC system shown in Fig. \ref{fig:system_model}, where a PS transfers RF energy to a WD in the downlink (DL) to facilitate WIT to an IRS in the uplink (UL). It is assumed that the PS and the IRS are equipped with a fixed power source and with $N_1\!\geq\!1$ and $N_2\!\geq\!1$ antennas, respectively. However, the WD is assumed to be a simple low-power node that is solely powered by the RF signals broadcasted by the PS and is equipped with a single antenna due to energy and space constraints. 
Furthermore, we assume that the WD operates in a half-duplex in-band mode, where DL WPT and UL WIT occupy two orthogonal subslots of a time slot but utilize the same frequency band \cite{WPC_survey_Zhang_2016}. In particular, we adopt the harvest-then-transmit protocol, where in each time slot of duration $T$, the WD first harvests RF power from the PS for a time duration of $\tau T$ and then uses the harvested energy to transmit its backlogged data to the IRS for a time duration of $(1-\tau)T$, where $\tau$ is the EH time factor that satisfies $0<\tau<1$. More specifically, the WD is assumed to have limited storage capacity and therefore it consumes the harvested energy fully on a slot-by-slot basis. Moreover, the WD is assumed to have no knowledge of the UL channel state information (CSI) nor of the amount of harvested energy. Hence, power control at the WD is not possible. % and does not buffer energy over multiple time slots to avoid overflowing its energy buffer. 
%%%%%%%%%%%%%%%%%%%%%%%%%%%%
\begin{figure}[!t] 
\centering
\includegraphics[width=0.36\textwidth,height=0.16\textheight]{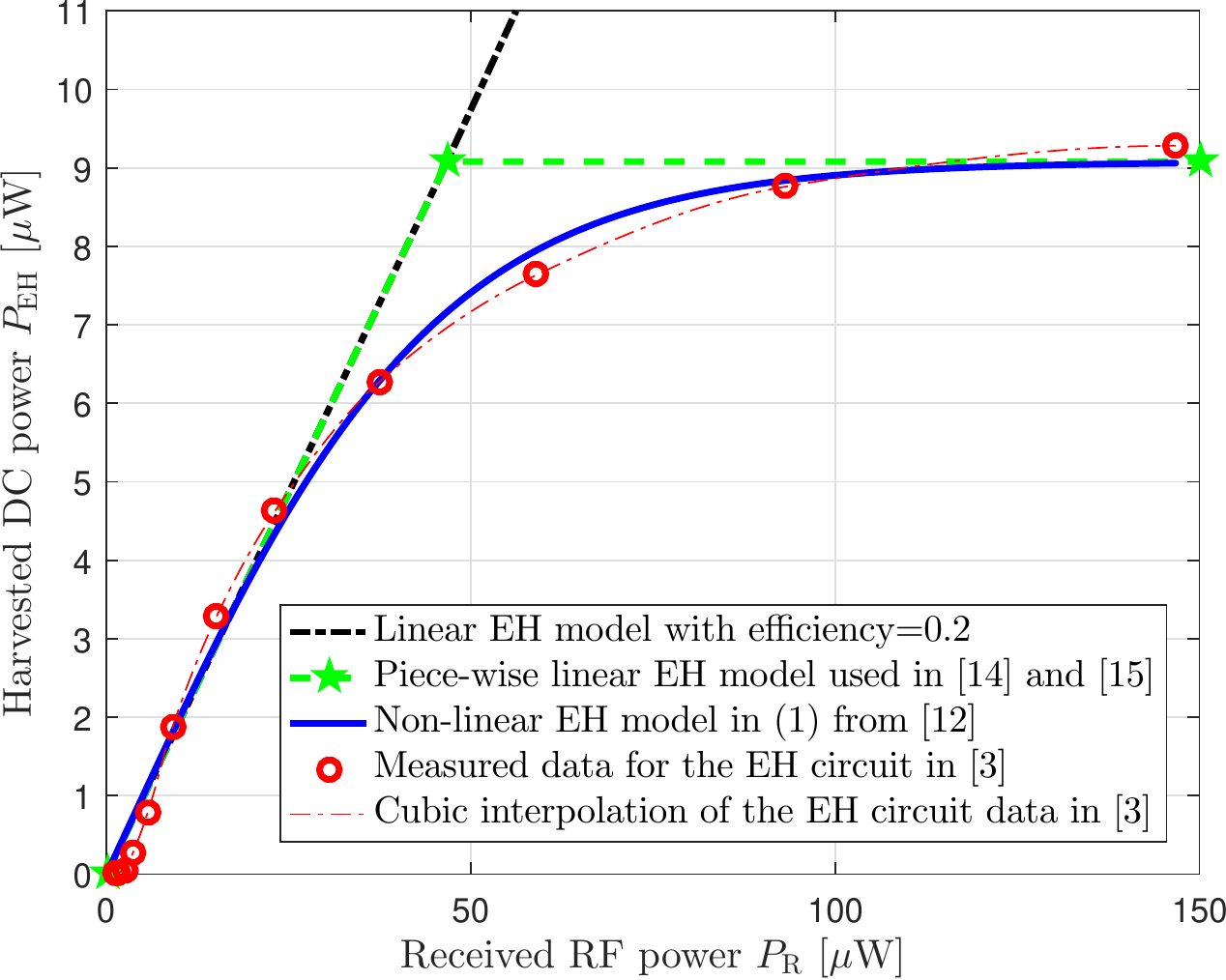}%lbrt , trim= {1cm 0 0cm 1cm},clip height=0.19\textheight
\caption{A comparison between the RF-to-DC power transfer functions of different EH models. Parameters $a\!=\!47.083\times 10^{3}$, $b\!=\!\unit[2.9]{\mu W}$, and $M\!=\!\unit[9.079]{\mu W}$ of the non-linear EH model in (\ref{eq:EH_non_linear}) were obtained using a standard curve fitting tool. The data of the EH circuit is obtained from \cite[Fig. 5(b) for a load resistance of $1\,$M$\Omega$]{self_calibrating_EH_circuit2013}.\vspace{-0.2cm}}
\label{fig:transfer_functions}
\end{figure}
\subsection{Energy Harvesting Model}
Unlike the conventional linear EH model commonly used in the literature, in this paper, we adopt the non-linear parametric EH model from \cite{Letter_non_linear} to characterize the end-to-end non-linear behaviour of RF EH circuits. In this model, the total harvested power at the WD is given by \cite{Letter_non_linear}
\begin{equation}
 \PEH= \frac{\frac{M}{1+\exp(-a(\PR-b))}  -M\Omega}{\!\!1 - \Omega}=M\frac{1-\e^{-a\PR}}{\,1+\e^{-a(\PR-b)}},
 \label{eq:EH_non_linear}
 \end{equation}
where $\Omega = 1/(1+\e^{ab})$ and $\PR$ is the received RF power at the WD. Parameters $M$, $a$, and $b$ in (\ref{eq:EH_non_linear}) capture the joint effects of various non-linear phenomena caused by hardware limitations. More specifically, $M$ represents the maximum power that can be harvested by the RF EH circuit, $a$ and $b$ are related to different physical hardware phenomena, such as the circuit sensitivity and current leakage. This model has been shown to accurately match  measurement data from various practical RF EH circuits \cite{RF_EH_networks_survey_2015,JR:Energy_harvesting_circuit,self_calibrating_EH_circuit2013}. For example, in Fig. \ref{fig:transfer_functions}, we compare the power transfer function of the non-linear EH model from \cite{Letter_non_linear}, the conventional linear EH model, and the piece-wise linear EH model used in \cite{L:Perf_Analys_Nakagami_nonlin} and \cite{J:Perf_Analys_WPR_MIMO_non_lin}. We also show the measurement data for an RF EH circuit read from  \cite[Fig.5(b) for a load resistance of $1\,$M$\Omega$]{self_calibrating_EH_circuit2013} and its cubic interpolation. The interpolated transfer function will be used in Section \ref{s:simulation_results} for the evaluation of the system performance obtained with the EH circuit in \cite{self_calibrating_EH_circuit2013} via simulation.% in comparison to that of the fitted non-linear EH model in (\ref{eq:EH_non_linear}).% Parameters $a$, $b$, and $M$ are obtained using a standard curve fitting tool.
\subsection{Channel Model}
The DL WPT and UL WIT channel coefficient vectors are denoted by $\V{h}\in\mathbb{C}^{N_1\times 1}$ and $\V{g}\in\mathbb{C}^{N_2\times 1}$, respectively. We assume that $\V{h}$ and $\V{g}$ capture the joint effect of the large scale path loss and the small-scale multipath fading. Furthermore, all channels are assumed to be quasi-static flat block fading, i.e., the channels remain constant over one time slot, but may vary from one slot to the next. Moreover, the $N_1$ WPT links are assumed to be independent and identically distributed (i.i.d.) Nakagami-$m$ fading. Hence, the channel power gain of each link is Gamma distributed with mean $\mu_1$, shape parameter $m_1$, and rate parameter $\lambda_1=m_1/\mu_1$, i.e., $|h_n|^2 \sim {\rm Gamma}(m_1,\lambda_1)$, $\forall n\in1,\ldots,N_1$. Similarly, the $N_2$ WIT links are assumed to be i.i.d. Nakagami-$m$ fading with parameters $\mu_2$, $m_2$, and $\lambda_2=m_2/\mu_2$, i.e., $|g_n|^2 \sim {\rm Gamma}(m_2,\lambda_2)$,  $\forall n\in1,\ldots,N_2$. We assume integer shape parameters, i.e., $m_1,m_2\in\{1,2,\ldots\}$. The Nakagami-$m$ fading model is adopted since field measurements have shown that it provides a good fit to outdoor and indoor multipath propagation. It is also a general fading model that reduces to Rayleigh fading for $m=1$ and can approximate Ricean fading \cite{Statistical_channel_model}, \cite[Eq. (2.26)]{Digital_comm_fading_Alouini2005}. 

In order to enhance the energy efficiency of the DL WPT, the PS utilizes its multiple antennas to focus the energy signal at the WD via beamforming. In particular, the PS performs maximum ratio transmission (MRT) with a beamforming vector $\V{w}_1\!=\!\V{h}^*/\norm{\V{h}}$, since this is known to maximize the amount of harvested energy at the WD \cite{WIET_beamforming_feedback_2014}. Moreover, the IRS performs maximum ratio combining (MRC) with a combining weight vector $\V{w}_2\!=\!\V{g}^*$ to maximize the instantaneous signal-to-noise ratio (SNR) of the combined signal at the IRS. We note that MRT and MRC require knowledge of the DL and UL channels at the PS and IRS, respectively. We assume perfect CSI knowledge to obtain an upper bound on the system performance. %\footnote{The DL CSI at the PS can be obtained by reverse link training, where the WD transmits pilot signals to the PS which performs channel estimation via channel reciprocity. On the other hand, the UL CSI at the IRS can be obtained by forward link training, where the WD transmits pilot signals to the IRS which can directly obtain the channel estimates \cite{WPC_survey_Zhang_2016}. Furthermore, we assume that the time and energy consumed by the WD for the transmission of the training symbols are negligible.}
\subsection{Equivalent Single Antenna System}
The energy beamforming with MRT in the DL WPT phase leads to an effective DL channel power gain $v_1$ equal to the sum of the channel power gains of the $N_1$ individual DL channels, i.e., $v_1=\norm{\V{h}}^2$ \cite{Digital_comm_fading_Alouini2005}. Similarly, the equivalent UL channel power gain after MRC is $v_2=\norm{\V{g}}^2$.
Hence, assuming Nakagami-$m$ fading for the individual links, the equivalent channel power gains follow a Gamma distribution, i.e., $v_i\sim {\rm Gamma}(m_iN_i,\lambda_i)$, $i\!=\!1,2$, with a probability density function (pdf) and a complementary cumulative distribution function (ccdf) given, respectively, by \cite{Digital_comm_fading_Alouini2005}\vspace{-0.15cm}
\begin{equation}
f_{v_i}(x)=\frac{\lambda_i^{m_iN_i}}{\Gamma(m_iN_i)}x^{m_iN_i-1}\e^{-\lambda_i x}
\label{eq:pdf}
\end{equation}\vspace{-0.2cm}
and \vspace{-0.2cm}
\begin{equation}
\bar{F}_{v_i}(x)=\e^{-\lambda_i x}\sum\limits_{k=0}^{m_iN_i-1}\frac{(\lambda_i x)^k}{k!}.
\label{eq:ccdf}\vspace{-0.1cm}
\end{equation}
In the equivalent single antenna system, the PS transmits an energy signal of power $\Pt$ in the WPT subslot for a duration of $\tau T$ through a DL channel with power gain $v_1$. At the WD, the received signal of power $\PR\!=\!\Pt v_1$ is applied to a non-linear EH circuit and the  harvested power is modelled by (\ref{eq:EH_non_linear}). The harvested energy $\EEH\!=\!\PEH\tau T$ is fully consumed in the UL subslot for a duration of $(1\!-\!\tau)T$. In particular, assuming that the WD uses a power amplifier with efficiency $\theta\!<\!1$, then $\theta\!\times\!100\%$ of the harvested energy is used for information transmission and the remaining amount of harvested energy is consumed by the power amplifier. Hence, the WD transmits information with power $\PWD\!=\!\theta\PEH\tau/(1\!-\!\tau)$ to the IRS through an UL channel with power gain $v_2$. At the IRS, the received signal is impaired by additive white Gaussian noise of power $\sigma^2$ and the received instantaneous SNR is $\gamma\!=\!\frac{\theta\PEH\tau v_2}{(1-\tau)\sigma^2}$.

%
%%%%%%%%%%%%%%%%%%%%%%%Commented blocks%%%%%%%%%%%%%

%\footnote{The power station is assumed to have perfect CSI of the $N_1$ DL channel coefficients. In practice, this is possible by reverse link training, where the WD transmits pilot signals to the power station which performs channel estimation via channel reciprocity} \footnote{The information receiving station is assumed to have perfect CSI of the $N_2$ UL channel coefficients. This is possible by forward link training, where the WD transmits pilot signals to the information receiving station which performs the channel estimation}.
% See "Energy-Efficient Resource Allocation for Wireless Powered Communication Networks" and 
%"User-Centric Energy Efficiency Maximization for Wireless Powered Communications" for explanation of their system model
%In this section, we define the system model and the EH model considered in the performance analysis. 
%%%%%%%%%%%%%%%%%%%%%%%%%%%%%%%%%%%%%%%%%%%%%%%%%%%%%%%%%%%%%%%%%%%%%%%%%%%%%%%%%%%%%%%%%%%%%%%%
\section{Outage Probability and Throughput Analysis}
\label{s:outage_analysis}
In this section, we analyze the outage probability  and the average throughput for the information transmission in the UL channel. Since the WD does not have CSI knowledge, it transmits data at a constant rate of $R$ bits/s/Hz. Therefore, assuming the use of a capacity-achieving code, an outage occurs when $R>\log_2(1+\gamma)\Rightarrow \gamma<\gammathr$, where $\gamma$ is the UL instantaneous SNR and $\gamma_{\rm thr}=2^{R}-1$ is the threshold SNR. 
\begin{proposition} \normalfont
For the considered WPC system, the outage probability at the IRS can be written as\vspace{-0.1cm}
\begin{equation}
\pout=1\!-\!\int_{0}^{\infty}\bar{F}_{v_2}\left(\frac{c(1+\e^{ab})}{1-\e^{-a\Pt x}}-c\e^{ab}\right)f_{v_1}(x)\dd x
\label{eq:Outage_general_INF_limits}
\end{equation}
or equivalently using finite integral limits as \vspace{-0.1cm}
\begin{equation}
\pout\!\!=\!1\!-\!\!\frac{1}{a\Pt}\!\!\int_{0}^{1}\!\!\!\!\bar{F}_{v_2}\!\!\left(\!\!\frac{c(1\!+\!\e^{ab}\!)}{y}\!-\!c\e^{ab}\!\!\right)\!f_{v_1}\!\!\left(\!\!\frac{-\ln(1\!-\!y)}{a\Pt}\!\!\right)\!\!\frac{1}{1\!-\!y}\dd y,\hspace{-0.2cm}
\label{eq:Outage_general_finite_limits}
\end{equation}
where $c=\frac{\gammathr\sigma^2(1-\tau)}{\theta M \tau}$  and $\gammathr=2^R\!-\!1$. Furthermore, the average throughput in bits/s/Hz is given by\vspace{-0.1cm}
\begin{equation}
T\!H=R(1-\tau)(1-\pout).
\label{eq:throughput} 
\end{equation}
%with the non-linear EH model in (\ref{eq:EH_non_linear}), %$f_{v_1}(\cdot)$ and $\bar{F}_{v_2}(\cdot)$ are the pdf and the ccdf of $v_1$ and $v_2$, respectively and
\end{proposition}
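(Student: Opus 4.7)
The plan is to start from the definition of outage, $\pout=\pr(\gamma<\gammathr)$ with the instantaneous SNR $\gamma=\theta\PEH\tau v_2/((1-\tau)\sigma^2)$, and rewrite this event as a bound on $v_2$ in terms of $v_1$. Since $\PEH$ is a strictly increasing function of $\PR=\Pt v_1$, conditioning on $v_1=x$ converts the SNR inequality into $v_2<\phi(x)$, where
\begin{equation*}
\phi(x)=\frac{\gammathr\sigma^2(1-\tau)}{\theta\tau}\cdot\frac{1}{\PEH(\Pt x)}=c\cdot\frac{1+\e^{-a(\Pt x-b)}}{1-\e^{-a\Pt x}},
\end{equation*}
after substituting (\ref{eq:EH_non_linear}) and recognising $c=\gammathr\sigma^2(1-\tau)/(\theta M\tau)$. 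A one-line algebraic rearrangement, multiplying $\e^{-a(\Pt x-b)}=\e^{ab}\e^{-a\Pt x}$ out and combining over the common denominator $1-\e^{-a\Pt x}$, shows $\phi(x)=\frac{c(1+\e^{ab})}{1-\e^{-a\Pt x}}-c\e^{ab}$, which is precisely the argument that appears inside $\bar F_{v_2}$ in (\ref{eq:Outage_general_INF_limits}).

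Next, I would invoke the independence of $v_1$ and $v_2$ (the DL and UL channel power gains) and apply the law of total probability:
\begin{equation*}
\pout=\E_{v_1}\!\big[\pr(v_2<\phi(v_1)\mid v_1)\big]=\int_0^\infty F_{v_2}(\phi(x))f_{v_1}(x)\,\dd x=1-\int_0^\infty \bar F_{v_2}(\phi(x))f_{v_1}(x)\,\dd x,
\end{equation*}
using $F_{v_2}=1-\bar F_{v_2}$ and $\int_0^\infty f_{v_1}(x)\,\dd x=1$. This yields (\ref{eq:Outage_general_INF_limits}).

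To obtain the finite-interval form (\ref{eq:Outage_general_finite_limits}), I would use the change of variables $y=1-\e^{-a\Pt x}$, so that $x=-\ln(1-y)/(a\Pt)$ and $\dd x=\dd y/(a\Pt(1-y))$, with $x:0\!\to\!\infty$ mapping bijectively to $y:0\!\to\!1$. Under this substitution $\phi(x)=c(1+\e^{ab})/y-c\e^{ab}$ and the stated expression follows directly after substituting the expression for $f_{v_1}$ evaluated at the transformed argument.

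Finally, (\ref{eq:throughput}) follows from a standard argument: in each slot of length $T$, the WD transmits for a fraction $1-\tau$ of the time at a fixed rate $R$ bits/s/Hz, and a capacity-achieving code delivers this rate whenever $\gamma\geq\gammathr$ (probability $1-\pout$) and delivers nothing otherwise; averaging over channel realizations gives $TH=R(1-\tau)(1-\pout)$. I do not anticipate a serious obstacle here — the proof is a conditioning argument combined with bookkeeping; the only care needed is in the algebraic manipulation that rewrites $\phi(x)$ in the form used in the statement and in verifying the Jacobian of the substitution.
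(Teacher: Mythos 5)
Your proposal is correct and follows essentially the same route as the paper: condition on $v_1$, invert the monotone EH transfer function to get the threshold on $v_2$, average using independence, and apply the substitution $y=1-\e^{-a\Pt x}$ for the finite-limit form. The only cosmetic difference is that the paper packages your common-denominator rearrangement of $\phi(x)$ as the identity $(1-z)^{-1}=1-(1-z^{-1})^{-1}$; the content is identical.
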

\begin{proof}
For a given DL channel $v_1$, the outage probability is given by $\pout=\pr(\frac{\theta\PEH\tau v_2}{(1-\tau)\sigma^2}<\gammathr)=\pr(v_2<v_{\rm thr})=1-\bar{F}_{v_2}(v_{\rm thr})$, where $v_{\rm thr}=\frac{\gammathr \sigma^2 (1-\tau)}{\theta\PEH \,\tau}$. Using $\PEH$ in (\ref{eq:EH_non_linear}) with $\PR=\Pt v_1$ and the identity $(1-z)^{-1}=1-(1-z^{-1})^{-1}$, we get $v_{\rm thr}=\frac{c(1+\e^{ab})}{1-\e^{-a\Pt v_1}}-c\e^{ab}$, where $c=\frac{\gammathr\sigma^2(1-\tau)}{\theta M \,\tau}$. By averaging over $v_1$, the average outage probability reduces to $\pout=\int_{0}^\infty\left[ 1-\bar{F}_{v_2}\left(\frac{c(1+\e^{ab})}{1-\e^{-a\Pt x}}-c\e^{ab}\right)\right]f_{v_1}(x)\dd x$. Using $\int_0^\infty f_{v_1}(x)\dd x=1$, the outage probability reduces to (\ref{eq:Outage_general_INF_limits}). Applying the change of variables $y=1-\e^{-a\Pt x}$ in (\ref{eq:Outage_general_INF_limits}), we obtain the equivalent representation of the outage probability with finite integral limits in (\ref{eq:Outage_general_finite_limits}). This completes the proof.
\end{proof}

\begin{proposition} \normalfont
For the considered WPC system with Nakagami-$m$ fading channels, the outage probability at the IRS is given by\vspace{-0.2cm}
\begin{align}
&\pout\!\!=\!1\!-\!\!\frac{\lambda_1(-1)^{m_1N_1\!-\!1}}{a\Pt\Gamma(m_1N_1)}\e^{-\lambda_2c}\!\!\!\sum\limits_{k=0}^{m_2N_2-1}\!\!\frac{(\lambda_2c\e^{ab})^k}{k!}\!\sum\limits_{l=0}^{k}\!\!\binom{k}{l}\!(-\!1)^{k\!-\!l}\notag\\[-1ex]
&(1\!+\!\e^{-ab})^l \frac{\partial^{m_1N_1\!-\!1} }{\partial s^{m_1N_1\!-\!1}}\!\left[\Gamma\!\!\left(\frac{\lambda_1 s}{a\Pt}\right)\!U\!\!\left(\frac{\lambda_1 s}{a\Pt},l,\lambda_2c(1+\e^{ab})\right)\right]\!\Bigg|_{s=\!1},
\label{eq:Outage_Nakagami}
\end{align}\\[-1.5ex]
where $c\!=\!\frac{\gammathr\sigma^2(1-\tau)}{\theta M \tau}$ and $\gammathr\!=\!2^R\!-\!1$. Furthermore, as $\Pt\!\to\!\infty$, the harvested power $\PEH\!\to\!M$, which leads to an asymptotic outage probability and an asymptotic throughput given respectively by\vspace{-0.2cm}
\begin{equation}
\pout\Big|_{\Pt\to\infty}=1-\bar{F}_{v_2}(c)=\frac{\gamma(m_2N_2,\lambda_2 c)}{\Gamma(m_2N_2)}
%1-\bar{F}_{v_2}(c)=1-\e^{-\lambda_2 c}\sum\limits_{k=0}^{m_2N_2-1}\frac{(\lambda_2 c)^k}{k!}.
\label{eq:asym_Pout}\vspace{-0.2cm}
\end{equation}
and\vspace{-0.3cm}
\begin{equation}
T\!H\Big|_{\Pt\to\infty}\!\!\!=R(1-\tau)\bar{F}_{v_2}(c)=R(1-\tau)\frac{\Gamma(m_2N_2,\lambda_2 c)}{\Gamma(m_2N_2)}.
%\e^{-\lambda_2 c}\!\!\!\!\sum\limits_{k=0}^{m_2N_2-1}\frac{(\lambda_2 c)^k}{k!}.
\label{eq:asym_Throughput}
\end{equation}
\label{prop:outage_Nakagami}\vspace{-0.3cm}
\end{proposition}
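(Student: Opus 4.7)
The plan is to start from the integral representation of $\pout$ in (\ref{eq:Outage_general_INF_limits}), specialize the pdf and ccdf to the Gamma/Erlang forms in (\ref{eq:pdf})--(\ref{eq:ccdf}), and then reduce the resulting integral to the confluent hypergeometric form defining $U(\alpha,\gamma,z)$. The key first step is the change of variables $u = \e^{-a\Pt v_1}/(1-\e^{-a\Pt v_1})$, equivalently $v_1 = \frac{1}{a\Pt}\ln\frac{1+u}{u}$, which maps $v_1\in(0,\infty)$ bijectively onto $u\in(0,\infty)$ with Jacobian $|\dd v_1/\dd u| = 1/(a\Pt\, u(1+u))$ and, crucially, turns the argument of $\bar{F}_{v_2}$ into the affine expression $v_{\rm thr}(u) = c(1+\e^{ab})(1+u) - c\e^{ab}$. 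This linearization is what makes the subsequent binomial expansion of $v_{\rm thr}^k$ produce clean $(1+u)^l$ factors that match the $U$-function kernel.

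After substituting and combining $\e^{-\lambda_1 v_1} = u^{s}/(1+u)^{s}$ (with $s=\lambda_1/(a\Pt)$), the factor $v_1^{m_1N_1-1}$, the Jacobian, and the $(1+u)^l$ produced by the binomial expansion, each $(k,l)$-term of the integrand takes the form $\e^{-zu}\,u^{s-1}(1+u)^{l-s-1}\,[\ln((1+u)/u)]^{m_1N_1-1}$ multiplied by combinatorial constants and the $\Pt$-independent prefactor $\e^{-\lambda_2 c}$, where $z=\lambda_2 c(1+\e^{ab})$. The central device for handling the awkward log-power factor is parameter differentiation in the exponent: since $\partial_s[u^{s-1}(1+u)^{l-s-1}] = u^{s-1}(1+u)^{l-s-1}\ln(u/(1+u))$, iterating gives
\begin{equation*}
[\ln((1+u)/u)]^{m_1N_1-1}u^{s-1}(1+u)^{l-s-1} = (-1)^{m_1N_1-1}\frac{\partial^{m_1N_1-1}}{\partial s^{m_1N_1-1}}\bigl[u^{s-1}(1+u)^{l-s-1}\bigr].
\end{equation*}
Pulling the $s$-derivative outside the $u$-integral (justified on a compact neighbourhood of $s$ by uniform integrability) reduces the inner integral to $\Gamma(s)\,U(s,l,z)$ via the defining integral of $U$ given in the notation. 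A final chain-rule rescaling $s\mapsto \lambda_1 s/(a\Pt)$ absorbs the factor $(a\Pt/\lambda_1)^{m_1N_1-1}$, shifts the evaluation point to $s=1$, and the identity $(\lambda_2 c(1+\e^{ab}))^l(-\lambda_2 c\e^{ab})^{k-l} = (\lambda_2 c\e^{ab})^k(1+\e^{-ab})^l(-1)^{k-l}$ rearranges the double sum into the form (\ref{eq:Outage_Nakagami}).

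For the asymptotic part, I would observe that for every $v_1>0$ the received RF power $\PR = \Pt v_1\to\infty$ as $\Pt\to\infty$, so both $\e^{-a\Pt v_1}$ and $\e^{-a(\Pt v_1 - b)}$ vanish and (\ref{eq:EH_non_linear}) yields $\PEH\to M$, hence $v_{\rm thr}\to c$ pointwise in $v_1$. Dominated convergence (using $\bar{F}_{v_2}\le 1$) lets me pass the limit inside the integral in (\ref{eq:Outage_general_INF_limits}) and gives $\pout|_{\Pt\to\infty}=1-\bar{F}_{v_2}(c)$; writing $\bar F_{v_2}(c) = \Gamma(m_2N_2,\lambda_2 c)/\Gamma(m_2N_2)$ and using $\Gamma(n)=\Gamma(n,x)+\gamma(n,x)$ produces (\ref{eq:asym_Pout}), and (\ref{eq:asym_Throughput}) is then immediate from (\ref{eq:throughput}).

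I expect the $U$-function step to be the main obstacle. The simultaneous presence of $[\ln((1+u)/u)]^{m_1N_1-1}$ and the $\Pt$-dependent non-integer exponents $s=\lambda_1/(a\Pt)$ rules out any direct recourse to a table integral, so the parameter-differentiation identity above is the only clean route to the $U$-kernel. Beyond justifying the interchange of $\partial_s$ and $\int_0^\infty$, the bookkeeping is delicate: tracking the sign $(-1)^{m_1N_1-1}$, telescoping the $\Pt$-dependent factor $(\lambda_1/(a\Pt))^{m_1N_1-1}$ from the rescaling against the $(1/(a\Pt))^{m_1N_1-1}$ coming out of $v_1^{m_1N_1-1}$, the $1/(a\Pt)$ from the Jacobian, and the $\lambda_1^{m_1N_1}/\Gamma(m_1N_1)$ from $f_{v_1}$ so as to collapse to the prefactor $\lambda_1/(a\Pt\Gamma(m_1N_1))$, and recasting my natural binomial weights into the paper's normalized form $(\lambda_2 c\e^{ab})^k(1+\e^{-ab})^l(-1)^{k-l}$, all require careful matching to reproduce (\ref{eq:Outage_Nakagami}) exactly.
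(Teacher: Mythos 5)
Your proof is correct, and its skeleton coincides with the paper's: binomially expand the Gamma ccdf, absorb the factor $[\ln(\cdot)]^{m_1N_1-1}$ coming from $v_1^{m_1N_1-1}$ into an $(m_1N_1\!-\!1)$-fold parameter derivative $\partial_s^{m_1N_1-1}$ of the exponent, interchange that derivative with the integral, and recognize the remaining integral as $\Gamma(s)U(s,l,\lambda_2 c(1+\e^{ab}))$; the asymptotic statements follow from $\PEH\to M$ exactly as in the paper. The one step where you genuinely diverge is the evaluation of the inner integral: the paper works on the finite interval via $y=1-\e^{-a\Pt x}$, invokes the table integral \cite[Eq. 3.471.2]{table_of_integrals_Ryzhik} to obtain a Whittaker $W$ function, and then converts $W$ to $U$ via \cite[Eqs. 9.232.1, 9.220.2]{table_of_integrals_Ryzhik}; your substitution $u=\e^{-a\Pt v_1}/(1-\e^{-a\Pt v_1})$ (equivalently $u=(1-y)/y$ applied to the paper's integrand) lands directly on the defining Kummer integral $\Gamma(\alpha)U(\alpha,\gamma,z)=\int_0^\infty \e^{-zt}t^{\alpha-1}(1+t)^{\gamma-\alpha-1}\dd t$, which is more self-contained and avoids the Whittaker detour entirely, at the cost of having to verify by hand that the threshold becomes affine in $u$ and that the sign and the $(a\Pt/\lambda_1)^{m_1N_1-1}$ rescaling factors telescope correctly --- which your bookkeeping does (the prefactor collapses to $\lambda_1(-1)^{m_1N_1-1}/(a\Pt\,\Gamma(m_1N_1))\,\e^{-\lambda_2 c}$ and the binomial weights rearrange to $(\lambda_2 c\e^{ab})^k(1+\e^{-ab})^l(-1)^{k-l}$ as claimed).
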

\begin{proof}
The proof is provided in Appendix \ref{App:proof_outage_Nakagami}.
\end{proof}
%%%%%%%%%%%%%%%%%%%%%%
\begin{remark}
We note that (\ref{eq:Outage_Nakagami}) is a closed-form expression for the outage probability for a single-antenna Rayleigh fading DL WPT channel, i.e., for $m_1\!=\!1$ and $N_1\!=\!1$, since $\frac{\partial^{0}x}{s^0}\!=\!x$. When $m_1N_1\!\neq\!1$, there is no simple closed-form expression for the derivative of the term in the square brackets in (\ref{eq:Outage_Nakagami}). Hence, we evaluate it through numerical differentiation. We use the numerical differentiation solver of Mathematica that is based on Richardson's extrapolation method, which is a computationally fast and efficient method that provides accurate results for relatively high-order derivatives. 
\end{remark}
\begin{remark}
We note that, unlike the linear EH model which results in zero asymptotic outage probability, see. e.g. \cite[Fig. 2]{CR:Perf_Analys_WPT_Nakagami_linear}, the non-linear EH model results in an outage floor for high PS transmit powers $\Pt$, cf. (\ref{eq:asym_Pout}). Hence, the linear EH model suggests a misleadingly optimistic outage performance which is not achievable with practical RF EH circuits. Moreover, we note that the asymptotic relations in (\ref{eq:asym_Pout}) and  (\ref{eq:asym_Throughput}) are independent of the WPT DL channel. That is, when the PS transmit power $\Pt$ and the DL channel gain $v_1$ are such that the harvested power saturates, i.e.,  $\PEH\to M$, then the WD transmits with a constant power given by $\PWD=\theta M\tau/(1-\tau)$ and  the WPC system reduces asymptotically to a WIT UL channel with constant power supply. We note that our asymptotic analysis in Proposition \ref{prop:outage_Nakagami} and in the following Corollary is valid for any non-linear EH model whose harvested power saturates at $M$.
\label{rem:comments_on_Pout}
\end{remark}
%%%%%%%%%%%%%%%%%%%%%%%%%%%%%%%%%%%%%%%%%%%%%%%%%%%%
\begin{corollary}\normalfont
The optimal rate that maximizes the asymptotic throughput in (\ref{eq:asym_Throughput}) is unique and satisfies%\vspace{-0.1cm}
\begin{equation}
R=\frac{\Gamma\left(m_2N_2,\alpha(2^R-1)\right)\e^{\alpha(2^R-1)}}{\alpha^{m_2N_2}\ln(2) 2^R (2^R-1)^{m_2N_2-1}},
\label{eq:opt_R}
\end{equation}
where $\alpha\!=\!\lambda_2\sigma^2(1-\tau)/(\theta M\tau)$, and the optimal EH time factor that maximizes the asymptotic throughput in (\ref{eq:asym_Throughput}) is unique and satisfies%\vspace{-0.1cm}
\begin{equation}
\tau=\frac{\left(\beta\frac{1-\tau}{\tau}\right)^{m_2N_2}\e^{-\beta\frac{1-\tau}{\tau}}}{\Gamma\left(m_2N_2,\beta\frac{1-\tau}{\tau}\right)},
\label{eq:opt_tau}
\end{equation}
where $\beta=\lambda_2\sigma^2(2^R-1)/(\theta M)$.
\label{coro:optimal_tau_optimal_R}
\end{corollary}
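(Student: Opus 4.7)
The plan is to decouple the two optimizations by observing that, after absorbing the other variable into a single constant, the asymptotic throughput in (\ref{eq:asym_Throughput}) becomes a one-dimensional function in each case: fixing $\tau$ and setting $\alpha = \lambda_2\sigma^2(1-\tau)/(\theta M\tau)$ gives $T\!H(R) \propto R\,\Gamma(m_2N_2,\alpha(2^R-1))$, while fixing $R$ and setting $\beta = \lambda_2\sigma^2(2^R-1)/(\theta M)$ gives $T\!H(\tau) \propto (1-\tau)\,\Gamma(m_2N_2,\beta(1-\tau)/\tau)$. In each case I would derive the fixed-point equation from the stationarity condition and then establish uniqueness of the critical point separately.

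For the first-order conditions, the key ingredient is the identity $\frac{d}{dx}\Gamma(m,x) = -x^{m-1}\e^{-x}$. Applying it together with the chain rules $\frac{d}{dR}[\alpha(2^R-1)] = \alpha\,2^R \ln(2)$ and $\frac{d}{d\tau}[\beta(1-\tau)/\tau] = -\beta/\tau^2$, the product rule yields, after rearrangement,
\begin{equation*}
\Gamma(m_2N_2,\alpha(2^R-1)) = R\,\alpha^{m_2N_2}(2^R-1)^{m_2N_2-1}\e^{-\alpha(2^R-1)}\,2^R \ln(2)
\end{equation*}
for the rate and $\tau\,\Gamma(m_2N_2,\beta(1-\tau)/\tau) = (\beta(1-\tau)/\tau)^{m_2N_2}\e^{-\beta(1-\tau)/\tau}$ for the time factor, which upon dividing by the respective right-hand-side factors are exactly (\ref{eq:opt_R}) and (\ref{eq:opt_tau}).

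The main obstacle is uniqueness. My plan is to show strict log-concavity of $T\!H$ in each variable on its open domain; combined with the vanishing of $T\!H$ at both boundary endpoints ($R = 0$ and $R\to\infty$; $\tau\to 0^+$ and $\tau\to 1^-$), this forces a unique interior maximizer and hence a unique solution to each fixed-point equation. Writing $h(y) = y^{m_2N_2-1}\e^{-y}/\Gamma(m_2N_2,y)$ for the hazard rate of a Gamma distribution with integer shape $m_2N_2\geq 1$, I would invoke the classical fact that $h$ is non-decreasing (the Gamma law is IFR) to show $\frac{d^2}{dR^2}\ln T\!H(R) = -1/R^2 - \ln(2)\,\frac{d}{dR}[h(\alpha(2^R-1))\,\alpha\,2^R] < 0$, since the bracketed product of two non-decreasing positive functions (at least one strictly so) has a positive derivative. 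An analogous computation for $\tau$, exploiting that $\beta(1-\tau)/\tau$ and $1/\tau^2$ are both strictly decreasing in $\tau$ while $h$ is non-decreasing, yields $\frac{d^2}{d\tau^2}\ln T\!H(\tau) < 0$, which completes the argument.
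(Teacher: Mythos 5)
Your proposal is correct, and on the uniqueness half it is actually more complete than the paper's own argument. The stationarity computations coincide with what the paper does: the paper simply differentiates the asymptotic throughput in (\ref{eq:asym_Throughput}) with respect to $R$ and $\tau$ and sets the result to zero, and your applications of $\frac{\dd}{\dd x}\Gamma(m,x)=-x^{m-1}\e^{-x}$ with the two chain rules reproduce (\ref{eq:opt_R}) and (\ref{eq:opt_tau}) exactly. Where you diverge is uniqueness: the paper merely asserts that the throughput \emph{can be shown} to be strictly quasi-concave in $R$ and in $\tau$, whereas you prove the stronger property of strict log-concavity by recognizing $h(y)=y^{m_2N_2-1}\e^{-y}/\Gamma(m_2N_2,y)$ as the hazard rate of an Erlang law with shape $m_2N_2\ge 1$ and invoking its IFR property (which is elementary here, since $1/h(y)=\sum_{k=0}^{m_2N_2-1}\frac{(m_2N_2-1)!}{k!}\,y^{k-m_2N_2+1}$ is non-increasing). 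Combined with the vanishing of the throughput at the endpoints $R\to 0^+$, $R\to\infty$, $\tau\to 0^+$, $\tau\to 1^-$, this yields a unique interior critical point in each variable, which is what the corollary claims. The only cosmetic caveat is that in the second-derivative bound for $R$ you should note explicitly that the product term is bounded below by $h(\cdot)\,\alpha 2^R\ln(2)>0$ (rather than appealing loosely to ``product of non-decreasing functions''), and symmetrically for $\tau$; with that spelled out, your argument fully substantiates the uniqueness claim that the paper leaves unproven.
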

\begin{proof}
Eqs. (\ref{eq:opt_R}) and (\ref{eq:opt_tau}) can be obtained by taking the first-order derivative of the asymptotic throughput in (\ref{eq:asym_Throughput}) with respect to $R$ and $\tau$, respectively, and setting it to zero. It can be shown that the throughput function in (\ref{eq:asym_Throughput}) is strictly quasi-concave in $R$ and in $\tau$. Hence, the solutions of (\ref{eq:opt_R}) and (\ref{eq:opt_tau}) are the unique globally optimal rate and EH time factor, respectively, that maximize the asymptotic throughput in (\ref{eq:asym_Throughput}).% This completes the proof.
\end{proof}
%%%%%%%%%%%%%%Commented blocks%%%%%%%%%%%%%%%%%%%%%%
% 
%To evaluate (), numerical differentiation can be applied fast using for example Mathematica. Richard's extrapolation method []. We note that for m1=1 and N1=1, d 0 /ds0(z)=z and () is a closed-form sol for the outage probability. Furthermore, we note that the asymptotic result in is independent on the DL channel and results in an outage floor at high input powers. Although with a linear model, the outage approaches asymptotically zero, using practical RF EH circuits are typically characterized by a saturation behaviour that results in an outage floor. 

%%%%%%%%%%%%%%%%%%%%%%%%%%%%%%%%%%%%%%%%%%%%%%%%%%%%%%%%%%%%%%%%%%%%%%%%%%%%%%%%%%%%%%%%%%%%%%%%
\section{Simulation and Numerical Results}
\label{s:simulation_results}
In this section, we evaluate the performance of the considered WPC system and validate our analysis through simulations.  The simulation parameters are summarized in Table \ref{tab:simulation_parameters}.  To validate the accuracy of the adopted non-linear EH model, we simulate the system with the interpolated transfer function of the measured data from the EH circuit in \cite{self_calibrating_EH_circuit2013} and the fitted non-linear EH model in (\ref{eq:EH_non_linear}),  cf. Fig. \ref{fig:transfer_functions}. Figs. \ref{fig:Outage_Nakagami_diff_antennas_sim_anal_measured} -- \ref{fig:Max_Throughput_vs_tau_diff_N_diff_m} show that the analytical results derived in Section \ref{s:outage_analysis} are in perfect agreement with the simulated results that use the non-linear EH model.% in (\ref{eq:EH_non_linear}).
\begin{table}[!tp]
\vspace{0.18in} 
\caption{Simulation Parameters}
\begin{tabular}{@{}ll@{}}  \toprule   %\addlinespace[-5em] 
Parameter & Value \\ \midrule \addlinespace[0.5em]
Distance between PS and WD & $d_1=4\,$m\\ \addlinespace[0.5em]
Distance between WD and IRS & $d_2=10\,$m\\ \addlinespace[0.5em]
UL and DL carrier frequency  & $868\,$MHz, same frequency as in \cite{self_calibrating_EH_circuit2013} \\ \addlinespace[0.5em]
%UL and DL bandwidth  &$B=\unit[1]{MHz}$\\ \addlinespace[0.5em]
%Noise figure at the IRS& $8\,$dB \\ \addlinespace[0.5em]
Noise power at the IRS & $\sigma^2=-96\,$dBm \\ \addlinespace[0.5em]
Power amplifier efficiency at WD & $\theta=0.5$ \\ \addlinespace[0.5em]
DL and UL path loss exponent & $2.8$ \\ \addlinespace[0.5em]
Antenna gains at PS, IRS, WD & $11\,$dBi, $11\,$dBi, and $3\,$dBi\\ \addlinespace[0.5em]
Non-linear EH model fitted to \cite{self_calibrating_EH_circuit2013} & $a\!=\!47083$, $b\!=\!\unit[2.9]{\mu W}$, $M\!=\!\unit[9.079]{\mu W}$\\
 \addlinespace[0.5em]
%DL transmit power & $\Pt=1\,$W in Fig. \\ \addlinespace[0.5em]
\bottomrule
\end{tabular}
\label{tab:simulation_parameters}
\end{table}%%%%%%%%%%%%%%%%%%%%%%%%%%%%%%%%%%%%%%
%%%%%%%%%%%%%%%%%%%%%%%%%%%%%%%%%%%%%%%%%%%%%%%%%%%%%%%%%%
\begin{figure}[!t] 
\centering
\includegraphics[width=0.45\textwidth, height=0.185\textheight]{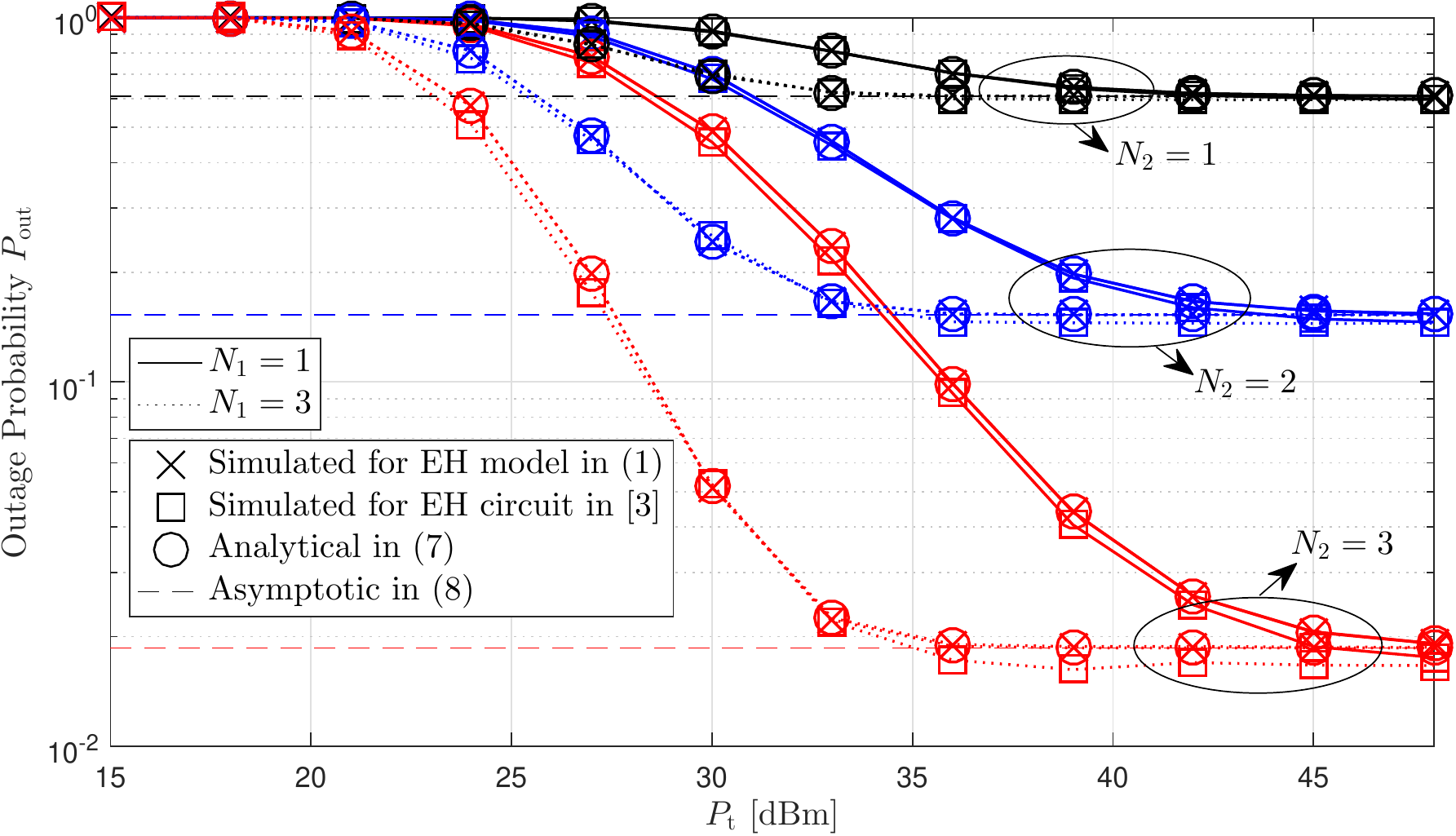}\\[-1ex]%lbrt , trim= {1cm 0 0cm 1cm},clip
\caption{Outage probability vs. PS transmit power for $m_1\!=\!m_2\!=\!2$, $\tau\!=\!0.5$, $R\!=\!\unit[5]{bits/s/Hz}$, and different $N_1$ and $N_2$.}
\label{fig:Outage_Nakagami_diff_antennas_sim_anal_measured}\vspace{-0.5em}
\end{figure}
%%%%%%%%%%%%%%%%%%%%%%%%%%%%%%%%%%%%%%%%%%%%%%%%%%%%%%%%%%%
\begin{figure}[!t] 
\centering
\includegraphics[width=0.45\textwidth, height=0.185\textheight]{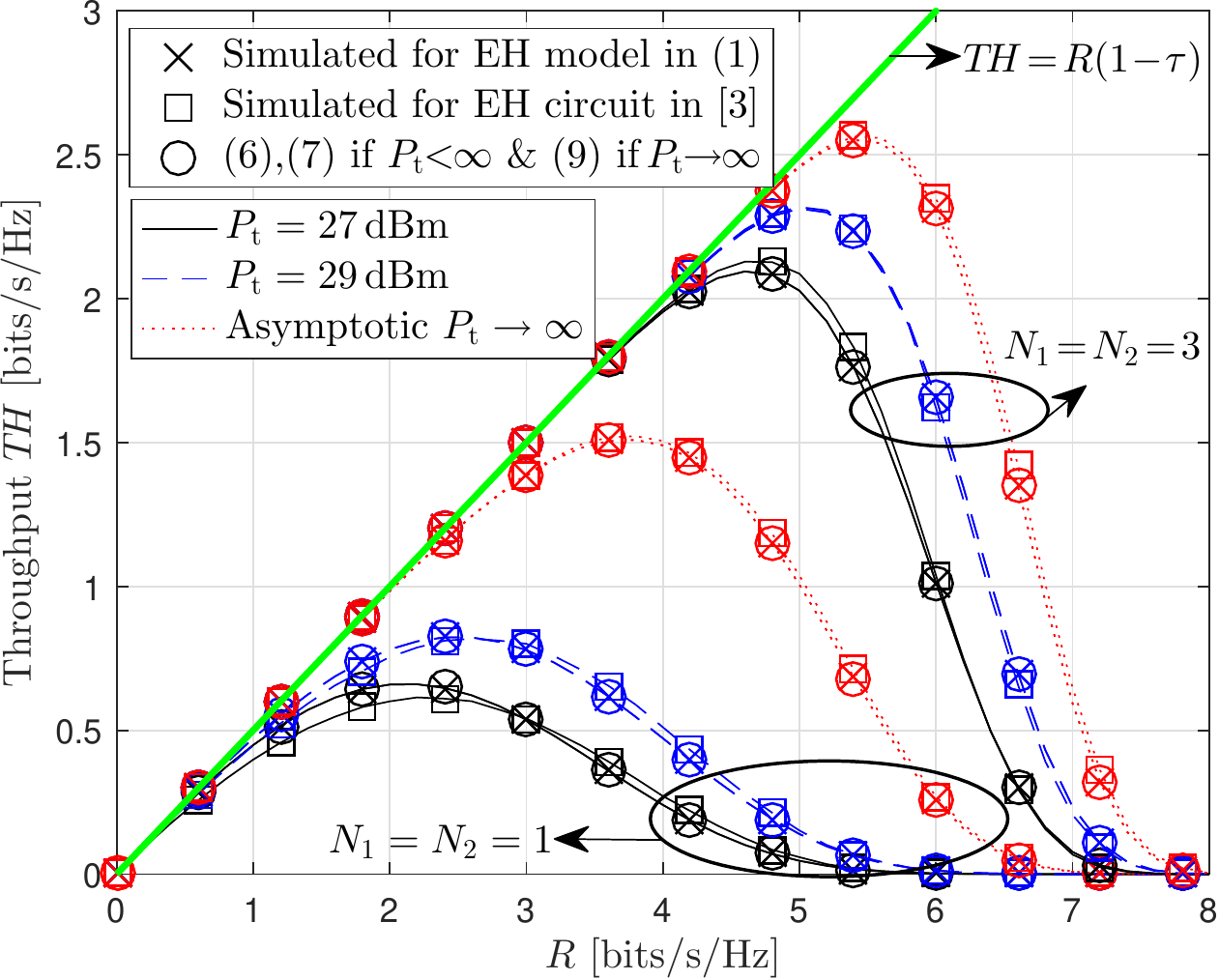}\\[-1ex]%lbrt , trim= {1cm 0 0cm 1cm},clip
\caption{Throughput vs. rate $R$  for $m_1=m_2=2$, $\tau=0.5$, different transmit powers $\Pt$, and different $N_1$ and $N_2$.} 
\label{fig:Throughput_vs_rate_diff_powers}
\end{figure}

In Fig. \ref{fig:Outage_Nakagami_diff_antennas_sim_anal_measured}, we plot the outage probability vs. the PS transmit power $\Pt$ for $m_1\!=\!m_2\!=\!2$, $\tau\!=\!0.5$, $R\!=\!\unit[5]{bits/s/Hz}$, and different values of $N_1$ and $N_2$. Unlike with the linear EH model \cite[Fig. 2]{CR:Perf_Analys_WPT_Nakagami_linear}, the non-linear EH model exhibits an outage floor due to the saturation of the harvested power for high PS transmit powers. Moreover, we observe that increasing $N_1$ and $N_2$ have different effects on the outage probability. In particular, for a larger number of energy beamforming antennas $N_1$, a lower PS transmit power is required to reach the outage floor. However, the asymptotic outage performance cannot be improved by increasing the PS transmit power nor by increasing the number of energy beamforming antennas $N_1$, cf. Remark \ref{rem:comments_on_Pout}. On the other hand, increasing the number of MRC receive antennas $N_2$ improves the outage floor significantly. For example, the asymptotic outage rate improves from $\!60\,\%\!$ to only $\!2\,\%\!$ by increasing $\!N_2\!$ from $\!N_2\!=\!1$ to $\!N_2\!=\!3$. Furthermore, the outage performance for the interpolated data of the EH circuit in \cite{self_calibrating_EH_circuit2013} is shown to be very close to that of the non-linear EH model in (\ref{eq:EH_non_linear}), but it exhibits a slightly lower outage floor. This is because the saturation level of the EH circuit is slightly higher than that of the non-linear EH model, cf. Fig. \ref{fig:transfer_functions}.

In Fig. \ref{fig:Throughput_vs_rate_diff_powers}, we plot the throughput vs. the transmission rate $R$ for $m_1\!=\!m_2\!=\!2$ and $\tau\!=\!0.5$. The throughput performance improves with increasing the PS transmit power until it reaches the asymptotic throughput shown in red. Moreover, a significant throughput gain is achieved by increasing the number of antennas. For example, when the number of antennas is increased from $N_1\!=N_2\!=\!1$ to $N_1\!=N_2\!=\!3$, the maximum throughput is increased by $228.5\%$ for $\Pt=\unit[27]{dBm}$. Besides, we also show an upper bound on the achievable throughput given by $T\!H=R(1-\tau)$, which is the asymptotic throughput for the linear EH model since its asymptotic outage probability is zero.

%%%%%%%%%%%%%%%%%%%%%%%%%%%%%%%%%%%%%%%%%%%%%%%%%%%%%%%%%%%
\begin{figure}[!t] 
\centering
\includegraphics[width=0.45\textwidth, height=0.185\textheight]{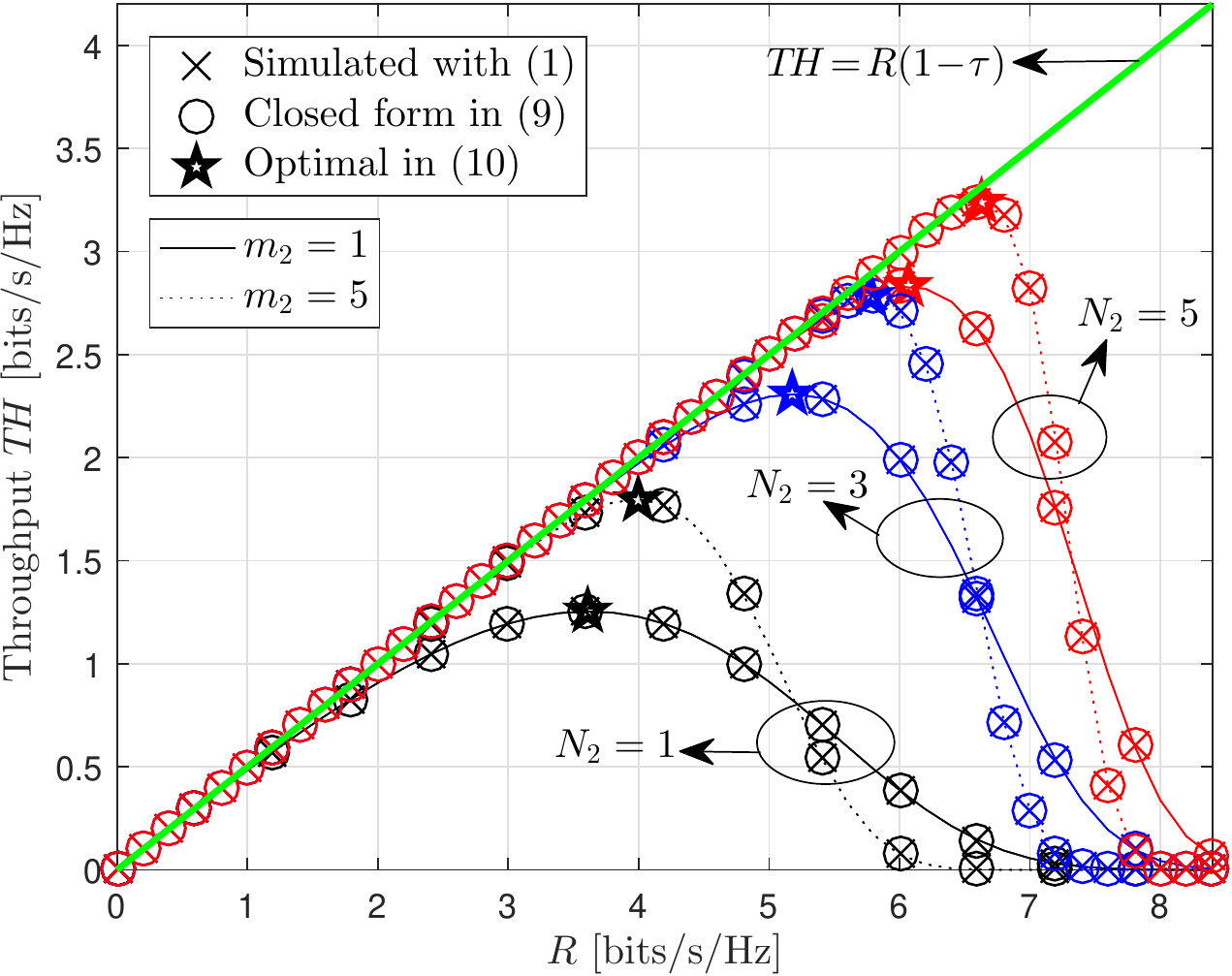}\\[-1ex]%lbrt , trim= {1cm 0 0cm 1cm},clip
\caption{Asymptotic throughput vs. rate for $\tau=0.5$ and different $m_2$ and $N_2$.}
\label{fig:Max_Throughput_vs_rate_diff_N_diff_m}\vspace{-0.5em}
\end{figure}
%%%%%%%%%%%%%%%%%%%%%%%%%%%%%%%%%%%%%%%%%%%%%%%%%%%%%%%%%%
\begin{figure}[!t] 
\centering
\includegraphics[width=0.45\textwidth, height=0.1605\textheight]{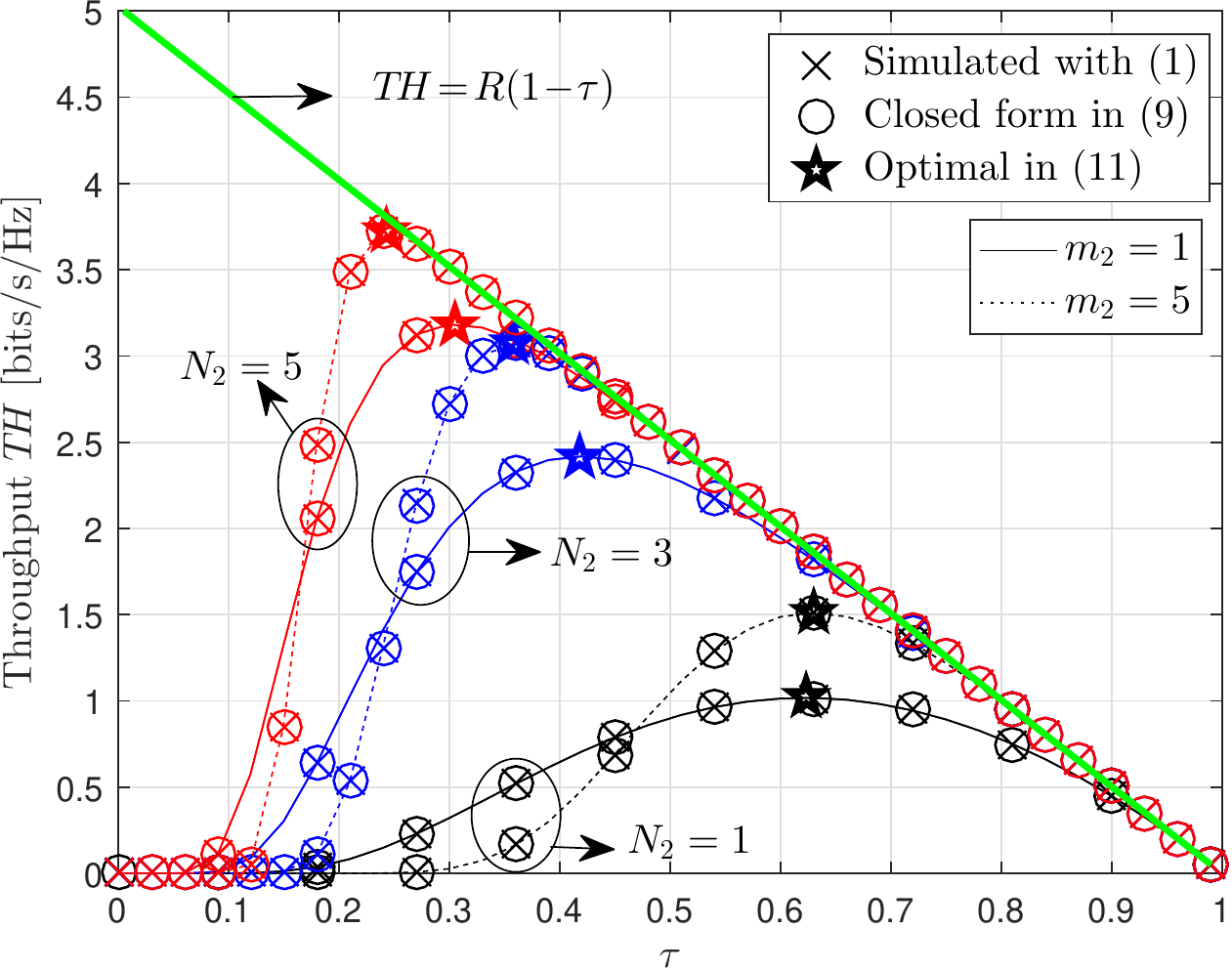}\\[-1ex]%lbrt , trim= {1cm 0 0cm 1cm},clip
\caption{Asymptotic throughput vs. EH time factor $\tau$  for $R=\unit[5]{bits/s/Hz}$ and different $m_2$ and $N_2$.\\[-5ex]}
\label{fig:Max_Throughput_vs_tau_diff_N_diff_m}
\end{figure}
%%%%%%%%%%%%%%%%%%%%%%%%%%%%%%%%%%%%%%%%%%%%%%%%%%%%%%%

In Figs. \ref{fig:Max_Throughput_vs_rate_diff_N_diff_m} and \ref{fig:Max_Throughput_vs_tau_diff_N_diff_m}, we show the behaviour of the \emph{asymptotic} throughput as a function of the transmission rate $R$ and the EH time factor $\tau$, respectively, for different $m_2$ and $N_2$. Recall that the asymptotic behaviour of the system is independent of the WPT DL channel, cf. Remark \ref{rem:comments_on_Pout}. We observe that the larger the number of receive antennas $N_2$, the longer the throughput curve follows the upper bound $T\!H\!=\!R(1-\tau)$. In other words, as $N_2\to\infty$, the asymptotic throughput tends to that of the linear EH model. Furthermore, although $m_2\!=\!1$ represents a more severe fading of the UL channel compared to $m_2\!=\!5$, it may result in a superior throughput performance for high rates $R$ or small EH time factors $\tau$. This is because smaller values of $m_2$ lead to more randomness of the UL channel $v_2$ which results in a better outage probability $\pout\!=\!\pr(v_2\!<\!c)$ when the threshold $c\!=\!\frac{(2^R-1)\sigma^2(1-\tau)}{M \tau}$ is large compared to the mean $\mu_2$ of $v_2$, see \cite[Fig. 2.1]{Digital_comm_fading_Alouini2005}. Finally, we validate the optimal rate $R$ and EH time factor $\tau$ that maximize the asymptotic throughput, by solving (\ref{eq:opt_R}) and (\ref{eq:opt_tau}) using a standard numerical root finding  tool, cf. \quotes{\large $\star$\normalsize} in Figs. \ref{fig:Max_Throughput_vs_rate_diff_N_diff_m} and \ref{fig:Max_Throughput_vs_tau_diff_N_diff_m}. We observe that as the number of MRC receive antennas $N_2$ increases, the optimal transmission rate $R$ increases, whereas the optimal EH time factor $\tau$ decreases.\vspace{-0.3cm}%This is because the asymptotic throughput in (\ref{eq:asym_Throughput}) is proportional to the ccdf of the UL channel power gain $v_2$ at $c=\frac{(2^R-1)\sigma^2(1-\tau)}{M \tau}$, which decreases with $m$ for large values of $c$, see \cite[Fig. 2.1]{Digital_comm_fading_Alouini2005}.
%%%%%Commented blocks
%We use the non-linear EH model with parameters fitted to the EH circuit in \cite{self_calibrating_EH_circuit2013}, cf. Fig. \ref{fig:transfer_functions}.
%In addition, we interpolate the measured data points of the EH circuit from \cite{self_calibrating_EH_circuit2013} and simulate the system with the interpolated transfer function. In order to compare the system performance of the EH circuit in \cite{self_calibrating_EH_circuit2013} to that of the non-linear EH model in (\ref{eq:EH_non_linear}), we interpolate the measured data points from \cite{self_calibrating_EH_circuit2013}, cf. Fig. \ref{fig:transfer_functions}, and simulate the system with the interpolated transfer function. 
%In order to assess the accuracy of the non-linear EH model in (\ref{eq:EH_non_linear}) in modelling the EH circuit from \cite{self_calibrating_EH_circuit2013},  we interpolate the measured data points from \cite{self_calibrating_EH_circuit2013}, cf. Fig. \ref{fig:transfer_functions}, and simulate the system with the interpolated transfer function. 
%compare the system performance with the EH model in (\ref{eq:EH_non_linear})  to that with the EH circuit in \cite{self_calibrating_EH_circuit2013} by
%%%%%%%%%%%%%%%%%%%%%%%%%%%%%%%%%%%%%%%%%%%%%%%%%%%%%%%%%%%%%%%%%%%%%%%%%%%%%%%%%%%%%%%%%%%%%%%%
\section{Conclusion}
\label{s:conclusion}
In this paper, we analyzed the outage probability and the average throughput of a multi-antenna WPC system with a non-linear EH model for Nakagami-$m$ fading. We have shown that for high PS transmit powers, the outage probability of the WIT link saturates and the asymptotic system performance is independent of the WPT link. Moreover, our results reveal that increasing the number of beamforming antennas, reduces the PS  transmit power required for the outage probability to saturate, whereas increasing the number of MRC receive antennas improves the asymptotic system performance significantly.%, and leads to a larger optimal transmission rate but a smaller optimal EH time factor.\vspace{-0.1cm}%Finally, a higher fading severity may benefit the asymptotic throughput for large values of the transmission rate and/or small values of the EH time factor.
% Furthermore, increasing the the optimal transmission rate and EH time factor that maximize the asymptotic throughput can be easily obtained by solving fixed-point equations. We conclude that our results provide useful insights for the design of WPC systems with practical RF EH circuits.
%Furthermore,  we derived a fixed-point equation for the optimal EH time factor and the optimal transmission rate $R$ that maximize the asymptotic throughput. In addition, the adopted non-linear EH model has proved to perform very close to practical EH circuits. 
% We have shown that, at high transmit powers, the non-linear EH model results in an outage floor that is independent of the WPT link. Furthermore, this floor is achieved with less transmit power for larger number of beamforming antennas. Moreover, a larger number of receive MRC antennas has shown to provide a significant performance gain for the asymptotic throughput. Furthermore,  we derived a fixed-point equation for the optimal EH time factor and the optimal transmission rate $R$ that maximize the asymptotic throughput. In addition, the adopted non-linear EH model has proved to perform very close to practical EH circuits. 

%We studied the system performance for different number of beamforming antennas, number of MRC antennas, transmit power of the PS, fading severity, transmission rate and EH time factor.

%%%%%%%%%%%%%%%%%%%%%%%%%%%%%%%%%%%%%%%%%%%%%%%%%%%%%%%%%%%%%%%%%%%%%%%%%%%%%%%%%%%%%%%%%%%%%%%%
\appendices
\section{Proof of Proposition \ref{prop:outage_Nakagami}}
\label{App:proof_outage_Nakagami}
By using the pdf and the ccdf of the equivalent UL and DL channels from (\ref{eq:pdf}) and (\ref{eq:ccdf}) in (\ref{eq:Outage_general_finite_limits}) and defining $c_1=\lambda_1/(a\Pt)$ and $c_2=\lambda_2c(1+\e^{ab})$, we can write
$\bar{F}_{v_2}\!\!\left(\!\frac{c(1+\e^{ab}\!)}{y}\!-\!c\e^{ab}\!\right)=\e^{\lambda_2 c\e^{ab}}\!\e^{\frac{-c_2}{y}}\sum\limits_{k=0}^{m_2N_2-1}\!\!\frac{(\lambda_2c\e^{ab})^k}{k!}\!\!\sum\limits_{l=0}^k\!\!\binom{k}{l}(-1)^{k-l}(1\!+\!\e^{-ab})^l y^{-l}$ and
$f_{v_1}\!\!\left(\!\frac{-\ln(1\!-\!y)}{a\Pt}\!\right)=\frac{\lambda_1^{m_1N_1}}{\Gamma(m_1N_1)}\left(\frac{-\ln(1\!-\!y)}{a\Pt}\right)^{m_1N_1-1}(1-y)^{c_1}$, where we used the identities $\alpha\ln(z)=\ln(z^\alpha)$ and $\e^{\ln(z)}=z$. Thus, the outage probability in (\ref{eq:Outage_general_finite_limits}) reduces to\vspace{-0.2cm}
\begin{equation}
\begin{aligned}
&\pout\!\!=\!1\!-\!\!\frac{c_1(-1)^{m_1N_1\!-\!1}}{\Gamma(m_1N_1)}\e^{\lambda_2c\e^{ab}}\!\!\!\sum\limits_{k=0}^{m_2N_2-1}\!\!\frac{(\lambda_2c\e^{ab})^k}{k!}\!\sum\limits_{l=0}^{k}\!\!\binom{k}{l}\!(-\!1)^{k\!-\!l}\\[-1ex]
&(1\!+\!\e^{-ab})^l \!\!\int_{0}^{1}\!\!\left[c_1\ln(1-y)\right]^{\!m_1N_1\!-\!1}(1\!-\!y)^{c_1-1}\e^{\frac{-c_2}{y}}y^{-l} \dd y.\!\\[-2ex]
\end{aligned}
\label{eq:Pout_step}
\end{equation}
In order to solve the integral in (\ref{eq:Pout_step}), we use the relation $\frac{\partial^{n}}{\partial s^{n}}Z^{q s}=[q\ln(Z)]^n  Z^{q s}$ at $Z\!=\!1\!-\!y$, $n\!=\!m_1N_1\!-\!1$, and $q\!=c_1$ and use the following substitution $\left[c_1\ln(1\!-\!y)\right]^{\!m_1N_1\!-\!1}(1\!-\!y)^{c_1}=\frac{\partial^{m_1N_1\!-\!1} (1-y)^{c_1s}}{\partial s^{m_1N_1\!-\!1}}\!\Big|_{s=\!1}$. By swapping the order of integration and differentiation, the integral in (\ref{eq:Pout_step}) reduces to $\frac{\partial^{m_1N_1\!-\!1} }{\partial s^{m_1N_1\!-\!1}}I_s\big|_{s=\!1}$, where $I_s=\int_{0}^{1}(1\!-\!y)^{c_1s-1}\e^{\frac{-c_2}{y}}y^{-l} \dd y$. Using \cite[Eq. 3.471.2]{table_of_integrals_Ryzhik}, $I_s$ reduces to $I_s=c_2^{-\frac{l}{2}}\e^{-\frac{c_2}{2}}\Gamma(c_1 s)W_{\frac{l}{2}-c_1 s,\frac{l}{2}-\frac{1}{2}}(c_2)$, where we used $W_{\alpha,\beta}(z)=W_{\alpha,-\beta}(z)$ given in \cite[Eq. 9.232.1]{table_of_integrals_Ryzhik}. Using the relation between the Whittaker W function and the confluent hypergeometric function $W_{\alpha,\beta}(z)=z^{\beta+\frac{1}{2}}\e^{-\frac{z}{2}}U(\beta-\alpha+1/2,2\beta+1,z)$ \cite[Eq. 9.220.2]{table_of_integrals_Ryzhik}, $I_s$ reduces to $I_s=\e^{-c_2}\Gamma(c_1 s)U(c_1 s,l,c_2)$. Replacing the integral in (\ref{eq:Pout_step}) by $\frac{\partial^{m_1N_1\!-\!1} }{\partial s^{m_1N_1\!-\!1}}I_s\big|_{s=\!1}$ and using $c_1=\lambda_1/(a\Pt)$ and $c_2=\lambda_2c(1+\e^{ab})$, the outage probability reduces to  (\ref{eq:Outage_Nakagami}). Next, we prove the asymptotic outage probability in (\ref{eq:asym_Pout}). In the limit as $\Pt\to\infty$, the harvested power saturates at $\PEH\to M$. Hence, the outage probability simplifies to  $\pout=\pr(\frac{\theta M\tau v_2}{(1-\tau)\sigma^2}<\gammathr)=\pr(v_2<c)=1-\bar{F}_{v_2}(c)$, where $\bar{F}_{v_2}(c)$ is given in (\ref{eq:ccdf}). Finally, the asymptotic throughput follows directly from (\ref{eq:throughput}). This completes the proof.
%%%%%%%%%%%%%%%%%%%%%%%%%%%%%%%%%%%%%%%%%%%%%%%%%%%%%%%%
\bibliographystyle{IEEEtran}
\bibliography{literature}

\end{document}